\newcommand{\euler}{\mathrm{e}}
\newcommand{\RR}{\mathbb{R}}
\newcommand{\NN}{\mathbb{N}}	
\newcommand{\ZZ}{\mathbb{Z}}
\newcommand{\PP}{\mathbb{P}}
\newcommand{\EE}{\mathbb{E}}
\newtheorem{theorem}{Theorem}[section]
\newtheorem{lemma}[theorem]{Lemma}
\newtheorem{corollary}[theorem]{Corollary}
\theoremstyle{definition}
\newtheorem{definition}[theorem]{Definition}
\theoremstyle{remark}
\newtheorem{remark}[theorem]{Remark}
\begin{document}
\title{Wegner estimate for Landau-breather Hamiltonians}
\author{Matthias T\"aufer, Ivan Veseli\'c}
 \address{Technische Universit\"at Chemnitz, Fakult\"at f\"ur Mathematik, Germany}
%
\date{\today}

\begin{abstract}
 We consider Landau Hamiltonians with a weak coupling random electric potential of breather type.
 Under appropriate assumptions we prove a Wegner estimate.
 It implies the Hoelder continuity of the integrated density of states.
 The main challenge is the problem how to deal with non-linear dependence on the random parameters.
\end{abstract}

\maketitle

\section{Introduction}

Wegner estimates play an important role in the theory of random Schr\"odinger operators.
They provide upper bounds on the expected number of eigenvalues (or sometimes on the probability of finding at least one eigenvalue) in an energy interval of a finite volume restriction of a random operator, formalizing the intuitive fact that eigenvalues of a random operator are indeed random, and in many cases continuously distributed.
Such estimates are a crucial ingredient in the multi-scale analysis which is an inductive procedure over length scales to establish localization, i.e. the occurrence of dense pure point spectrum in some energy regimes for the full space operator and the non-spreading of initially localized wavepackets.
Furthermore, Wegner estimates allow to study the regularity of the integrated density of states (or spectral distribution function) associated to ergodic random operators, cf. Corollary~\ref{cor:IDS} for a precise statement.
\par
In this article,  we prove a Wegner estimate, Theorem~\ref{thm:Wegner}, for the Landau Hamiltonian with a random breather potential
 \begin{align} \label{e:HoB}
 H_B^\omega = H_B + \lambda V_\omega,\quad
 &\text{where}\
 H_B = (- i \nabla - A)^2,\
 A = (B/2) (x_2, -x_1)\
 \text{and}\\
 \nonumber
 & V_\omega(x) := \sum_{j \in \ZZ^2} u_{\omega_j}(x-j),\
 u_{t}(x) = u(x/t),\
 \omega \in \Omega \subset (0, 1/2)^{\ZZ^2}
 \end{align}
in the small disorder regime $0<\lambda \ll 1 $. Here $u: \RR^2 \to [0, \infty)$ is an appropriately chosen, bounded and compactly supported single-site potential, see hypotheses (i)--(iii) below and inequality \eqref{eq:partial_u}.
Corollary~\ref{cor:IDS} yields H\"older continuity of the integrated density of states with respect to any H\"older exponent $\theta \in (0,1)$ at low energies in the low disorder regime.
More precisely, for every $E_0 \in \RR$, there is a critical disorder parameter $\lambda_0 > 0$ such that for all $\lambda \in (0,\lambda_0)$, the IDS is H\"older continuous in $(- \infty, E_0]$.
\par
While Wegner estimates are well understood for many models, in particular the alloy-type model, the random breather Landau Hamiltonian poses additional challenges, namely unbounded coefficients functions of the partial differential operator and a non-linear dependence of the operator on the random parameters $\omega_j$.
The conjunction of these features does not allow to apply one of the established Wegner estimates to the model under consideration.
\par
Let us explain this in more detail:
If a family of electromagnetic random Schr\"odinger operators $\{H^\omega\}_{\omega \in \tilde \Omega}$, $\tilde \Omega$ being some probability space, has coefficients which are uniformly bounded in space and in randomness, then there are nowadays powerful \emph{scale free unique continuation principles} (UCPs for short) at disposal \cite{RojasMolinaV-13,Klein-13,NakicTTV-15,BorisovTV-15} which facilitate the proof of a Wegner estimate.
The term \emph{scale free} means that the UCP yields estimates for restrictions of $H^\omega$ to cubes of side length $L \in \NN$ which are independent of the scale $L$.
Furthermore, these UCPs are \emph{uniform} over the ensemble $\left\{ H^\omega\right\}_{ \omega \in \tilde \Omega}$.
While the first three cited papers allow only for electric potentials, \cite{BorisovTV-15} covers electro-magnetic vector potentials as well, yet still only for operators with uniformly bounded coefficient functions.
Our object of study, the operator $H_B^\omega$ in \eqref{e:HoB}, can be written as
\[
 H_B^\omega u = - \Delta u + 2 i A \cdot \nabla u + (A^2 + \lambda V_\omega) u,
\]
and we see that it has the zeroth-order coefficient function $(A^2 + \lambda V_\omega)$ and the first-order coefficient function $2 i A$. Neither of them is bounded on $\RR^2$.
Therefore, none of the above mentioned scale free UCPs apply.
Of course, once we restrict $H_B^\omega$ to a finite volume box the coefficients are indeed bounded, but with a sup norm which depends on the diameter of the box.
The resulting Wegner estimate would carry an exponential volume dependence which would be too weak to yield any useful application.
\par
On the other hand, if the random parameters influence the Hamiltonian in a \emph{linear} way, as it is the case for the alloy-type potential
\begin{equation*}
V_\omega(x) := \sum_{j \in \ZZ^2} \omega_j  \, u(x-j),
\end{equation*}
one can exploit the linearity and the eigenvalue equation \cite{CombesHK-03,CombesHK-07}
and find a Wegner estimate by merely using a UCP for the unperturbed, non-random part, cf.~Section 4 in \cite{CombesHK-03}.
\par
Since we are in none of the two described situations we have to devise new ideas to complete the Wegner estimate, in particular a novel deterministic estimate, Theorem \ref{prop:prop_1}, for traces of spectral projections valid for Hamiltonians satisfying an abstract UCP.
It allows to treat non-linear perturbations in a similar way as linear ones, at least in the weak coupling regime.
The method of proof requires some restrictions on our model, namely a monotonicity assumption, as formulated in \eqref{eq:partial_u}, and the weak coupling regime.
\par
Let us summarize previous results on random breather Schr\"odinger operators: They have been introduced in the mathematical literature in \cite{CombesHM-96}, a non-trivial Wegner estimates was proven in \cite{CombesHN-01}, Lifschitz tail estimates were established in \cite{KirschV-10} and \cite{Veselic-07}, and a flexible  Wegner estimate was given in \cite{NakicTTV-15}, which covers also \emph{standard} random breather
potentials $ V_\omega(x) := \sum_{j \in \ZZ^2} u_{\omega_j}(x-j)$ where the single site potential $u$ is the characteristic function of a ball or a cube.
This model is excluded in the present paper, since it does not fullfill hypothesis \eqref{eq:partial_u}.
We should also mention that the analysis of random breather models shares many features and challenges with the ones of Delone-alloy potentials, studied e.g. in \cite{RojasMolinaV-13,RojasMolina-Thesis,Klein-13}.
However it is fair to say that random breather models are more difficult to analyze, precisely due to the mentioned non-linearity, cf.~the discussion in the last paragraph in \cite{NakicTTV-15}.
In the next section we formulate our results, then we infer the UCP we are using, and in the last section we prove the main theorem on the Wegner estimate and the mentioned abstract Theorem \ref{prop:prop_1}.
\section{Notation and main result}

Througout this paper we define $\Lambda_L(x) := x + (-L/2, L/2)^2 \subset \RR^2$ as the open cube of side length $L$ and $B(x,r)$ as the open ball of radius $r$ centered at $x \in \RR^2$.
If $x = 0$, we simply write $\Lambda_L$.
%
The Landau Hamiltonian is the self-adjoint operator
$ H_B = (- i \nabla - A)^2$ with $A = (B/2) (x_2, -x_1)$ on $L^2(\RR^2)$ where $B>0$ is the magnetic field strength.
We define a scale $L_B > 0$ such that corresponding squares have integer flux by letting
\[
 K_B := 2 \lceil \sqrt{B/(4 \pi)} \rceil,
 \quad L_B = K_B \sqrt{4 \pi / B},
 \quad
 \text{and}
 \quad
 \NN_B = L_B \NN
\]
where $\lceil a \rceil$ denotes the least integer larger or equal than $a$.
%
Let $\{ \omega_j \}_{j \in \ZZ^2}$ be a process of i.i.d random variables on a probability space $(\Omega, \PP) = (\times_{\ZZ^d} (0, 1/2), \otimes_{\ZZ^d} \mu)$, where $\mu$ is a probability measure supported on $[\omega_{-}, \omega_{+}] \subset (0, 1/2)$ with bounded density $\nu_\mu$.
%
For the single-site potential $u: \RR^d \to [0, \infty)$ we make the following assumptions:
\begin{enumerate}[(i)]
 \item $u$ is measurable, bounded and compactly supported.
 \item For every $t \in [\omega_{-}, \omega_{+}]$, the map $x \mapsto (\partial / \partial t) u(x / t)$ exists  for almost every $x \in \RR^d$.
 \item There is $C_u >0$ such that for every $t \in [\omega_-, \omega_+]$ we find $x_0 = x_0(t) \in \Lambda_{1}$ with
\begin{equation}
\label{eq:partial_u}
  \frac{\partial}{\partial t} u \left( \frac{x}{t} \right) \geq C_u \chi_{B(x_0(t),r)}(x)
  \
  \text{for almost every}
  \
  x \in \RR^d.
\end{equation}
\end{enumerate}
In Remark~\ref{rem:u} we comment on these assumptions and provide some explicit examples.
Now, we define the random breather potential as
\[
 V_\omega(x) := \sum_{j \in \ZZ^2} u_{\omega_j}(x-j)
 \quad
 \text{where}
 \quad
 u_t(x) := u \left( \frac{x}{t} \right).
\]
The Landau-breather Hamiltonian is the family of operators
$
 \left\{ H_B^\omega = H_B + \lambda V_\omega \right\}_{ \omega \in \Omega},
$
where $\lambda > 0$ is the disorder parameter.
Let $H_{B,L}$ and $H_{B,L}^\omega$ be restrictions of the operators $H_B$ and $H_B^\omega$ to $L^2(\Lambda_L)$ with periodic boundary conditions, respectively.

\begin{theorem}[Wegner estimate]
\label{thm:Wegner}
Assume that $u \in L^\infty(\RR^2)$ satsifies the hypotheses (i)--(iii) above and let $B > 0$, $E_0 \in \RR$, $\theta \in (0,1)$.
Then there is $\lambda_0 > 0$ such that for all $0 < \lambda < \lambda_0$ we find $C = C(B,E_0,\theta) > 0$ and $L_0 \in \NN_B$ such that for all intervals $I \subset (- \infty, E_0]$ with $\lvert I \rvert \leq B/2$ and all $L \in \NN_B$ with $L \geq L_0$ we have
\[
 \EE \left[ \operatorname{Tr} \left( \chi_{I}(H_{B,L}^\omega) \right) \right]
 \leq
 C \cdot \lvert I \rvert^\theta \cdot L^2 .
\]
The critical disorder parameter $\lambda_0$ only depends on $B$, $E_0$ and  $u$ and is explicitely given in inequality \eqref{eq:lambda}.
\end{theorem}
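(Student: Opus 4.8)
The plan is to obtain Theorem~\ref{thm:Wegner} by combining three inputs: the scale-free unique continuation principle (UCP) for the finite-volume Landau Hamiltonian $H_{B,L}$ established in the next section, the abstract deterministic trace estimate of Theorem~\ref{prop:prop_1}, and the classical spectral-averaging scheme for Wegner estimates. The first step is to bring the model into the shape required by Theorem~\ref{prop:prop_1}. Fix $B>0$, $E_0\in\RR$ and $\theta\in(0,1)$, and let $I=[E_-,E_+]\subset(-\infty,E_0]$ with $\lvert I\rvert\le B/2$. Because $u\ge 0$ and, for $\omega_j\le\omega_+<1/2$, the term $u_{\omega_j}(\cdot-j)$ is supported in the unit cell centred at $j$, the family $s\mapsto H_{B,L}^{\omega+s}$ obtained by shifting every coordinate $\omega_j$ by a common $s\ge 0$ is non-decreasing in the form sense, with derivative $\lambda\sum_{j}\partial_{\omega_j}V_\omega$ at $s=0$, where
\[
 \partial_{\omega_j}V_\omega(x)\;=\;\frac{\partial}{\partial t}\,u\!\left(\frac{x-j}{t}\right)\bigg|_{t=\omega_j}.
\]
By hypotheses~(ii)--(iii) this derivative is bounded below by $\lambda C_u\sum_{j}\chi_{B(x_0(\omega_j)+j,\,r)}=:\lambda C_u\,\chi_{S_\omega}$, and the set $S_\omega$ contains a ball of radius $r$ in every unit cell of $\ZZ^2$; it is thus $(r,1)$-equidistributed, uniformly in $\omega\in\Omega$. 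This is exactly where the non-linearity is dealt with: what enters is only a \emph{lower} bound on the breather derivative, and -- up to the constant $C_u$ -- that bound is once more the characteristic function of an equidistributed set, i.e.\ precisely the geometric datum a scale-free UCP converts into a spectral inequality.

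Next I would feed this into the machinery. Pick $\tilde E$ with $E_0<\tilde E\le E_0+2B$ lying in a spectral gap of $\sigma(H_{B,L})=\{(2n+1)B:n\ge0\}$; the UCP of the next section provides a scale-free constant $\kappa_0=\kappa_0(B,E_0,r)>0$ with $\chi_{(-\infty,\tilde E]}(H_{B,L})\,\chi_{S}\,\chi_{(-\infty,\tilde E]}(H_{B,L})\ge\kappa_0\,\chi_{(-\infty,\tilde E]}(H_{B,L})$ for every $(r,1)$-equidistributed $S$ and every $L\in\NN_B$. Theorem~\ref{prop:prop_1}, whose hypotheses are exactly such a UCP for the unperturbed operator together with the weak-coupling condition, then performs the passage to $H_{B,L}^\omega$: since $\|\lambda V_\omega\|\le\lambda\|u\|_\infty$ and $\tilde E$ sits a fixed distance $\gtrsim B$ from $\sigma(H_{B,L})$, the spectral projections $\chi_{(-\infty,\tilde E]}(H_{B,L}^\omega)$ and $\chi_{(-\infty,\tilde E]}(H_{B,L})$ differ by $O(\lambda)$ in operator norm, so the inequality above passes to $H_{B,L}^\omega$ with $\kappa_0$ replaced by $\kappa_0/2$ once $\lambda$ is below the explicit threshold $\lambda_0$ of \eqref{eq:lambda}, uniformly in $\omega\in\Omega$ and in $L\in\NN_B$ with $L\ge L_0$. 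Combining this effective monotonicity with the bounded density $\nu_\mu$ and a Combes--Hislop--Klopp switch-function estimate, Theorem~\ref{prop:prop_1} yields
\[
 \EE\bigl[\operatorname{Tr}\chi_I(H_{B,L}^\omega)\bigr]\;\le\;C\,\lvert I\rvert^{\theta}\,\operatorname{Tr}\chi_{(-\infty,\tilde E]}(H_{B,L})
\]
with a constant $C=C(B,E_0,\theta)$ independent of $L$, $I$ and $\omega$; the Hölder exponent $\theta<1$ (rather than the Lipschitz exponent of the linear alloy-type model) and the deterioration of $C$ as $\theta\to1$ are the price for the non-linear $\omega$-dependence and come out of the weak-coupling expansion inside Theorem~\ref{prop:prop_1}.

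It remains to count eigenvalues of the unperturbed operator. With periodic (magnetic) boundary conditions on a square of integer flux, each Landau level of $H_{B,L}$ has multiplicity $\frac{B}{2\pi}L^2$, equal to the number of flux quanta through $\Lambda_L$, so
\[
 \operatorname{Tr}\chi_{(-\infty,\tilde E]}(H_{B,L})\;=\;\#\{n\ge0:(2n+1)B\le\tilde E\}\cdot\frac{B}{2\pi}L^2\;\le\;c(B,E_0)\,L^2 .
\]
Inserting this into the previous bound proves the theorem, with $L_0\in\NN_B$ any scale beyond which the UCP and the perturbation estimate are effective.

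The genuinely hard parts are the two black boxes on which this reduction rests. The first is the scale-free UCP for $H_{B,L}$: since $H_B$ has the unbounded coefficient functions $2iA$ and $A^2$, none of the off-the-shelf scale-free unique continuation principles applies, and one must exploit the special (quadratic, locally gauge-removable) structure of $A$ to obtain a spectral inequality whose constant does not deteriorate with the scale $L$. The second is the weak-coupling transfer packaged in Theorem~\ref{prop:prop_1}: keeping the comparison between the spectral projections of $H_{B,L}^\omega$ and $H_{B,L}$ uniform in $L$ and $\omega$, while tracking how the loss degrades, is precisely where the non-linear dependence on $\omega$ genuinely bites -- and is the reason the argument delivers Hölder rather than Lipschitz continuity and works only for small $\lambda$.
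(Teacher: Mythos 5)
Your overall architecture does match the paper's (a UCP for the free Landau Hamiltonian on the torus, the abstract Theorem~\ref{prop:prop_1} to transfer it to the perturbed operator at weak coupling, monotonicity from hypothesis~(iii), and a Combes--Hislop--Klopp spectral averaging), but two of your steps are genuine gaps. First, the UCP you invoke is stated for the full spectral projection $\chi_{(-\infty,\tilde E]}(H_{B,L})$, i.e.\ for the sum of all Landau-level projections up to $\tilde E$. What is actually available (Lemma~\ref{lem:ucp}) is $\Pi_{n,L}\,\chi_{W_r(L)}\,\Pi_{n,L}\ge C_1\,\Pi_{n,L}$ for a \emph{single} level; you cannot simply add these over $n$ because of the cross terms $\Pi_{n,L}\chi_S\Pi_{m,L}$, and the paper's closing remark explicitly lists the multi-level version as an unproven, desirable improvement. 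The way around this -- and the reason for the hypothesis $\lvert I\rvert\le B/2$, which your argument never uses -- is to take $J=[I_--B/4,\,I_++B/4]$, an interval of length at most $B$, so that $\chi_J(H_{B,L})$ is either $0$ or a single $\Pi_{n,L}$ and Lemma~\ref{lem:ucp} gives \eqref{eq:ucp_with_W} directly. Your additional claim that the perturbed and unperturbed projections differ by $O(\lambda)$ in norm is both unnecessary (Theorem~\ref{prop:prop_1} needs the UCP only for the unperturbed $H$ and performs the transfer internally) and not what the argument should rest on. A minor related slip: $\lVert\lambda V_\omega\rVert\le\lambda\lVert u\rVert_\infty$ ignores overlaps of translated single-site supports; one needs $\lambda\,\lceil\max\operatorname{supp}u\rceil^2\lVert u\rVert_\infty$ as in \eqref{eq:lambda}.

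Second, the entire probabilistic half is compressed into one sentence ("combining this effective monotonicity with the bounded density $\nu_\mu$ and a Combes--Hislop--Klopp switch-function estimate, Theorem~\ref{prop:prop_1} yields \dots"), but this is precisely where the non-linear $\omega$-dependence must be handled and it has to be carried out: one applies Theorem~\ref{prop:prop_1} with the $\omega$-dependent weight $W=\sum_j\chi_{B(x_0(\omega_j)+j,r)}$, uses \eqref{eq:partial_u} to dominate $W\le C_u^{-1}\sum_j\partial_{\omega_j}u_{\omega_j}(\cdot-j)$, inserts a switch function $f$ with $-C_4\lvert I\rvert f'\ge\chi_I$, rewrites $\operatorname{Tr}\bigl(f'(H_{B,L}^\omega)\,\partial_{\omega_j}u_{\omega_j}(\cdot-j)\bigr)=\partial_{\omega_j}\operatorname{Tr} f(H_{B,L}^\omega)$, integrates by parts in each $\omega_j$ against the density $\nu_\mu$, and then uses the spectral shift function bound $\lvert\operatorname{Tr}(f(H_{B,L}^\omega\mid_{\omega_j=\omega_+})-f(H_{B,L}^\omega\mid_{\omega_j=\omega_-}))\rvert\le C_\theta\lvert I\rvert^{\theta-1}$; the factor $L^2$ then comes from the sum over $j\in\Lambda_L\cap\ZZ^2$, not from counting Landau-level degeneracies of $H_{B,L}$. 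Your statement that the H\"older exponent $\theta<1$ "comes out of the weak-coupling expansion inside Theorem~\ref{prop:prop_1}" is wrong -- that theorem contains no $\theta$ at all; $\theta$ enters solely through the spectral shift function estimate -- which indicates that this half of the proof was not actually worked out.
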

As a corollary, we obtain H\"older continuity of the integrated density of states (IDS) in the low disorder regime.
For an ergodic random, self-adjoint operator $H$ on $L^2(\RR^2)$ and its restrictions $H_L$ to boxes $\Lambda_L$ with self-adjoint boundary conditions,
the IDS is defined as the almost sure limit
\[
 N(E,H) = \lim_{L \to \infty} \frac{ \operatorname{Tr} \left( \chi_{(- \infty, E]}(H_L) \right)}{L^2},
\]
see \cite{PasturF-92} and \cite{Veselic-08} for a broader discussion. In fact, the almost sure limit is independent of the choice of $\omega$;
a consequence of ergodicity.
Although the Landau Hamiltonian $H_B$ is not random, its IDS $N(\cdot, H_B)$  is well defined.
In fact, it is an explicitly calculable step function with jumps at the Landau levels $B (2 n - 1)$, $n \in \NN$, cf. \cite{HupferLMW-01b,RaikovW-02b}.
For random families of operators, the IDS needs not to exist a priori, but if the family is ergodic,
as it is in our case for the Landau-breather Hamiltonian, then it will exist almost surely and will be independent of $\omega$, cf. \cite{Veselic-08} and the references therein.
We denote by $N(\cdot, H_\omega)$ the IDS of the ensemble $\{H_B^\omega\}$.

\begin{corollary}
\label{cor:IDS}
 Fix $B > 0$, $E_0 \in \RR$ and $\theta \in (0,1)$.
 Then, for all disorder parameters $0 < \lambda \leq \lambda_0$ we have
 \[
  N(E, H_B^\omega) - N(E- \epsilon, H_B^\omega) \leq C \cdot \lvert \epsilon \rvert^\theta
  \quad
  \text{for all}\
  \epsilon > 0,\
  E \leq E_0.
 \]
\end{corollary}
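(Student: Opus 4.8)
The plan is to derive the Hölder continuity of the IDS directly from the Wegner estimate of Theorem~\ref{thm:Wegner} by a standard limiting argument. First I would fix $B > 0$, $E_0 \in \RR$ and $\theta \in (0,1)$, and let $\lambda_0 > 0$ and the constant $C = C(B, E_0, \theta)$ be as provided by Theorem~\ref{thm:Wegner}. Given $E \leq E_0$ and $\epsilon > 0$, I would like to bound $N(E, H_B^\omega) - N(E - \epsilon, H_B^\omega)$. The key observation is that for each $L \in \NN_B$ the finite-volume counting function satisfies
\[
 \operatorname{Tr} \left( \chi_{(-\infty, E]}(H_{B,L}^\omega) \right)
 -
 \operatorname{Tr} \left( \chi_{(-\infty, E - \epsilon]}(H_{B,L}^\omega) \right)
 =
 \operatorname{Tr} \left( \chi_{(E - \epsilon, E]}(H_{B,L}^\omega) \right),
\]
so that, dividing by $L^2$ and taking expectations, Theorem~\ref{thm:Wegner} applied to the interval $I = (E - \epsilon, E] \subset (-\infty, E_0]$ gives
\[
 \EE \left[ \frac{\operatorname{Tr} \left( \chi_{(-\infty, E]}(H_{B,L}^\omega) \right) - \operatorname{Tr} \left( \chi_{(-\infty, E - \epsilon]}(H_{B,L}^\omega) \right)}{L^2} \right]
 \leq
 C \cdot \lvert \epsilon \rvert^\theta
\]
for all $L \geq L_0$, provided $\epsilon \leq B/2$ (the case $\epsilon > B/2$ is handled afterwards by subdividing $(E - \epsilon, E]$ into at most $\lceil 2\epsilon/B \rceil$ intervals of length $\leq B/2$ and summing, at the cost of enlarging $C$, or more simply by noting that a Hölder bound for small $\epsilon$ together with the trivial monotonicity bound $N(E, H_B^\omega) - N(E - \epsilon, H_B^\omega) \leq N(E_0, H_B^\omega) < \infty$ for large $\epsilon$ yields the claim after adjusting the constant).

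Next I would pass to the limit $L \to \infty$. By the definition of the IDS recalled in the excerpt, for almost every $\omega$ we have $L^{-2} \operatorname{Tr}(\chi_{(-\infty, E]}(H_{B,L}^\omega)) \to N(E, H_B^\omega)$ and likewise at $E - \epsilon$, and by ergodicity these limits are almost surely constant in $\omega$; in particular $N(E, H_B^\omega)$ and $N(E - \epsilon, H_B^\omega)$ are deterministic numbers. To interchange the limit with the expectation I would invoke Fatou's lemma: since the integrand $L^{-2} \operatorname{Tr}(\chi_{(E - \epsilon, E]}(H_{B,L}^\omega))$ is nonnegative,
\[
 N(E, H_B^\omega) - N(E - \epsilon, H_B^\omega)
 =
 \EE \left[ \lim_{L \to \infty} \frac{\operatorname{Tr} \left( \chi_{(E-\epsilon, E]}(H_{B,L}^\omega) \right)}{L^2} \right]
 \leq
 \liminf_{L \to \infty} \EE \left[ \frac{\operatorname{Tr} \left( \chi_{(E-\epsilon, E]}(H_{B,L}^\omega) \right)}{L^2} \right]
 \leq
 C \cdot \lvert \epsilon \rvert^\theta,
\]
which is the asserted bound. (On the left-hand side the expectation is superfluous once we know the limit is deterministic, but writing it this way makes the application of Fatou transparent.)

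The only genuinely delicate point is the existence and $\omega$-independence of the almost sure limit defining $N(\cdot, H_B^\omega)$, i.e. that the Landau-breather ensemble is ergodic with respect to the relevant $\ZZ^2$-action and that a Birkhoff-type argument applies to the trace of spectral projections; this is, however, standard for lattice-periodic random potentials and is exactly the content of the references \cite{PasturF-92,Veselic-08} cited in the text, so I would simply quote it rather than reprove it. A minor subtlety is that Theorem~\ref{thm:Wegner} is stated for half-open (indeed arbitrary) intervals $I$ and for periodic boundary conditions on $\Lambda_L$ with $L \in \NN_B$; one should make sure the boundary condition used in the definition of the IDS is compatible, or else note that the IDS is independent of the choice of self-adjoint boundary condition, which is again classical. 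Apart from these bookkeeping matters the corollary is an immediate consequence of the Wegner estimate.
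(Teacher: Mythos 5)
Your proposal is correct and follows exactly the standard Wegner-to-IDS limiting argument (apply Theorem~\ref{thm:Wegner} to $I=(E-\epsilon,E]$, use the almost sure, $\omega$-independent convergence defining the IDS, and pass to the limit via Fatou), which is precisely what the paper leaves implicit: it states Corollary~\ref{cor:IDS} without proof, referring to \cite{PasturF-92,Veselic-08} for the existence and ergodicity facts. Your handling of the case $\epsilon > B/2$ and of the boundary-condition/bookkeeping issues is also consistent with that intended argument.
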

Thus, the IDS of $H_B^\omega$ is locally H\"older continuous in $(- \infty, E_0]$ with respect to any H\"older exponent.

\begin{remark}
\label{rem:u}
Note that in condition~\eqref{eq:partial_u} the radius $r$ and the constant $C_u$ need to be $t$-independent, while we can allow $x_0$ to vary with $t$.
Condition \eqref{eq:partial_u} translates into
 \[
  -x/t^2 \cdot (\nabla u) (x/t)
  \geq
  C_u \chi_{B(x_0,r)}(x)  \quad \text{ for all } x \in \RR^d,\ t \in [ \omega_{-}, \omega_{+}]
 \]
 or equivalently
 \[
  -y  \cdot \nabla u(y)
  \geq
  C_u t\chi_{B(x_0,r)}(ty)
  =
  C_u t\chi_{\frac{1}{t}B(x_0,r)}(y)
  \quad \text{ for all } y,t
 \]
 and can be compared to the conditon on the breather potential
 \[
  u \in C^1(\RR^d\setminus \{0\}), \quad -x \cdot \nabla u \geq  \epsilon_0 u  \ \text{for all}\ x \in \RR^d \backslash \{ 0 \}
 \]
with fixed $\epsilon_0>0$  in \cite{CombesHN-01} which implies a singularity at the origin which we do not have.
\par
Let us give two examples of single-site potentials $u$ satisfying our assumptions.
 \begin{itemize}
 \item The smooth function
  \[
   u(x) = \exp \left( - \frac{1}{1 - \lvert x \rvert^2} \right) \chi_{\lvert x \rvert < 1},
  \]
  since, using $\omega_{-} \leq t \leq  1/2$
  \begin{align*}
  -x/t^2 \cdot (\nabla u) (x/t)
   &=
%
   x/t^2 \cdot \exp \left( - \frac{1}{1 - \lvert x/t \rvert^2} \right)
   \frac{2 (x/t^2) }{(1 - \lvert x/t \rvert^2)^2}
   \cdot
   \chi_{\lvert x \rvert < t}\\
   &\geq
   2 \euler^{-4/3}
   \lvert x \rvert^2
   \cdot
   \chi_{\lvert x \rvert < t/2}
   \geq
   \frac{\lvert x \rvert^2}{2} \chi_{\omega_{-}/4 \leq \lvert x \rvert < \omega_{-}/2}
   \geq
   \frac{\omega_{-}^2}{32} \chi_{B(x_0, \omega_{-}/8)}
  \end{align*}
  for every point $x_0$ with $\lvert x_0 \rvert = 3 \omega_{-}/8$.
  \item The hat potential $u(x) = \chi_{\lvert x \rvert < 1} (1 - \lvert x \rvert)$, since
  \[
   -x/t^2 \cdot (\nabla u) (x/t)
   =
   \frac{ \lvert x \rvert}{t^3} \chi_{\lvert x \rvert < t}
   \geq
   \frac{\chi_{t/2 \leq \lvert x \rvert < t}}{2 t^2}
   \geq
   \frac{1}{2 \omega_{+}^2} \chi_{B(x_0, \omega_{-}/4)}
   \geq
   2\chi_{B(x_0, \omega_{-}/4)}
  \]
  for every point $x_0$ with $\lvert x_0 \rvert = 3 t/4$.
 \end{itemize}
\end{remark}

\section{Unique continuation principle}
The spectrum of the Landau Hamiltonian on the torus $H_{B,L}$ consists of an increasing sequence of isolated eigenvalues of finite multiplicity at the Landau Levels $B(2 n - 1)$, $n = 1,2,...$, see for instance \cite{GerminetKS-07}, Section~5.
We denote the spectral projector onto the $n$-th Landau level by $\Pi_{n,L}$.
Furthermore, for $L \in \NN_B$ and $x \in \RR^2 / L \ZZ^2$, we write $\hat \chi_{x,L}$ for the characteristic function of the cube with side length $L$, centered at $x$ on the torus $\RR^2 / L \ZZ^2$.
The following Lemma is \cite{GerminetKS-07}, Lemma 5.3, which is an adaptation of \cite{CombesHKR-04}, Lemma 2, which itself is based on bounds developed in \cite{RaikovW-02a,RaikovW-02b}.

\begin{lemma}
\label{lem:GKS}
Fix $B > 0$, $n \in \NN$, $R > r > 0$ and $\eta > 0$. If $\kappa > 1$ and $L \in \NN_B$ are such that $L > 2 (L_B + \kappa R)$ then for all $x \in \Lambda_L$ we have
\[
 \Pi_{n,L} \hat \chi_{x, r} \Pi_{n,L} \geq C_0 \Pi_{n,L} \left( \hat \chi_{x,R} - \eta \hat \chi_{x, \kappa R} \right) \Pi_{n,L} + \Pi_{n,L} \tilde{ \mathcal{E}}_x \Pi_{n,L}
\]
where $C_0 = C_0(n,B,r, R, \eta) > 0$ is a constant and the symmetric error operator
\[
  \tilde{ \mathcal{E}}_x = \tilde{ \mathcal{E}}_x(n,L,B,r,R,\eta)
\]
satisfies
\[
 \lVert \tilde{ \mathcal{E}}_{x} \rVert \leq C_{n,B,r,R,\eta} \euler^{- m_{n,B} L}
\]
for some constants $C_{n,B,r,R,\eta} > 0$ and $m_{n,B} > 0$ which in particular do not depend on $x$.
\end{lemma}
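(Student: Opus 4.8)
The plan is to pull the torus estimate back to a purely deterministic estimate for $n$-th Landau-level functions on all of $\RR^2$, where the explicit structure of the Landau projection is at our disposal. I will use two facts about the spectral projection $\Pi_n$ of $H_B$ on $L^2(\RR^2)$: (a) its integral kernel is an explicit Laguerre polynomial in $B\lvert x-y\rvert^2$ times $\euler^{-B\lvert x-y\rvert^2/4}$ times a unimodular phase, so in particular it decays like $\euler^{-cB\,\mathrm{dist}^2}$; and (b) every $g\in\operatorname{Ran}(\Pi_n)$ equals $\euler^{-B\lvert\cdot\rvert^2/4}$ times a polyanalytic function of order $n$, hence obeys quantitative unique continuation inequalities. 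These are exactly the ingredients of \cite{RaikovW-02a,RaikovW-02b,CombesHKR-04}, and the lemma is obtained by combining them with a periodisation argument.

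\emph{Step 1: the plane estimate.} First I would prove that for every $x\in\RR^2$, $R>r>0$, $\kappa>1$, $\eta>0$ there is $C_0>0$ with
\[
 \lVert \chi_{B(x,r)}g\rVert^2 \ \geq\ C_0\big(\lVert\chi_{B(x,R)}g\rVert^2-\eta\lVert\chi_{B(x,\kappa R)}g\rVert^2\big)\qquad\text{for all }g\in\operatorname{Ran}(\Pi_n).
\]
By magnetic translation covariance of $\Pi_n$ one may take $x=0$, and by homogeneity normalise $\lVert g\rVert=1$. Writing $g=\euler^{-B\lvert\cdot\rvert^2/4}\varphi$ with $\varphi$ polyanalytic of order $n$, one controls $\lVert\chi_{B(0,R)}g\rVert^2$ by the supremum of $\lvert\varphi\rvert^2$ on a slightly larger disk; a Hadamard three-circle / Cauchy-type interpolation then bounds this supremum by a convex combination of $\lVert\chi_{B(0,r)}g\rVert^2$ and the a priori Gaussian growth bound $\lvert\varphi(z)\rvert\lesssim\euler^{B\lvert z\rvert^2/4}$, and Young's inequality trades the growth contribution for an $\eta$-multiple of the mass of $g$ on $B(0,\kappa R)$. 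The uniformity of $C_0$ in $g$ at fixed scales can alternatively be argued by compactness: $\chi_{B(0,\kappa R)}\Pi_n$ is Hilbert--Schmidt, weak limits remain in $\operatorname{Ran}(\Pi_n)$, and strong unique continuation (analyticity) forbids a nonzero Landau function from vanishing on $B(0,r)$. One passes freely between balls and cubes, absorbing fixed constants.

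\emph{Step 2: from the plane to the torus.} Since $L\in\NN_B$, the torus $\RR^2/L\ZZ^2$ carries integer flux, $H_{B,L}$ is well defined, and the eigenspace of $H_{B,L}$ at the $n$-th Landau level is obtained by magnetic periodisation over $L\ZZ^2$ of functions in $\operatorname{Ran}(\Pi_n)$. Under the hypothesis $L>2(L_B+\kappa R)$ the cube of side $\kappa R$ at $x$ does not wrap around the torus, so $\hat\chi_{x,\kappa R}$ and a fortiori $\hat\chi_{x,R},\hat\chi_{x,r}$ lift to single plane cubes whose $L\ZZ^2$-translates are pairwise disjoint. Hence for $\psi$ in the torus eigenspace with plane-lift $\tilde\psi$ and $a\in\{r,R,\kappa R\}$,
\[
 \langle\hat\chi_{x,a}\psi,\psi\rangle_{\RR^2/L\ZZ^2}\ =\ \langle\chi_{x,a}\tilde\psi,\tilde\psi\rangle_{\RR^2}\ +\ \mathcal R_{x,a}(\psi),
\]
where $\mathcal R_{x,a}$ gathers the cross terms between $\tilde\psi$ and its nontrivial magnetic translates; these are mediated by the kernel of $\Pi_n$ at distances $\geq L-\kappa R\gtrsim L$, so $\lvert\mathcal R_{x,a}(\psi)\rvert\leq C\euler^{-cBL^2}\lVert\psi\rVert^2\leq C\euler^{-m_{n,B}L}\lVert\psi\rVert^2$ for large $L$, with constants independent of $x$ by translation covariance. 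Feeding the Step-1 inequality (applied to the lifts $\tilde\psi$) into this identity for $a=r,R,\kappa R$ yields precisely the asserted operator inequality between quadratic forms on $\operatorname{Ran}(\Pi_{n,L})$, with $\tilde{\mathcal E}_x$ the symmetric operator collecting all remainders, and the error bound as stated.

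\emph{Main obstacle.} The genuinely hard part is Step 1 in its sharp, scale-localised form with the $\eta\hat\chi_{x,\kappa R}$ correction: a crude three-ball inequality does not suffice, because Landau functions have an irreducible spread of the order of the magnetic length and may concentrate in a thin shell, so one really needs the explicit Laguerre--Gaussian kernel together with the complex-analytic (Cauchy / three-circle / Remez-type) estimates for the polyanalytic factor, exactly as in \cite{RaikovW-02a,RaikovW-02b,CombesHKR-04}. A secondary, but still delicate, point is to make the periodisation remainder in Step 2 uniform in $x$ and of the stated purely exponential order $\euler^{-m_{n,B}L}$ (rather than Gaussian in $L$), which is the reason the clean square-exponential decay of the kernel has to be downgraded to $\euler^{-m_{n,B}L}$ for $L$ large.
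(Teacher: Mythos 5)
The paper does not actually prove this lemma: it is quoted from \cite{GerminetKS-07} (Lemma 5.3), which adapts \cite{CombesHKR-04} (Lemma 2) and ultimately rests on the kernel estimates of \cite{RaikovW-02a,RaikovW-02b}. Your two-step outline --- a quantitative estimate on $\operatorname{Ran}(\Pi_n)$ in the plane with the $-\eta$ correction, followed by a periodisation step producing an exponentially small error --- is precisely the strategy of that cited proof. Your Step 1 is essentially sound and can even be closed softly for a fixed centre: either by noting that, in the symmetric gauge, $\Pi_n\chi_{B(x,\rho)}\Pi_n$ is diagonal in the (simple) angular momentum basis of the $n$-th level, so the inequality reduces to a scalar inequality for each mode, with the $-\eta\chi_{B(x,\kappa R)}$ term disposing of the high angular momenta; or by a compactness argument using interior elliptic estimates on $B(x,\kappa R)$ together with real analyticity. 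In the latter, however, you must normalise the local mass $\lVert\chi_{B(x,\kappa R)}g\rVert$ rather than $\lVert g\rVert_{L^2(\RR^2)}$ (as your Hilbert--Schmidt/weak-limit phrasing suggests), since otherwise the weak limit may be zero and no contradiction results; magnetic translation covariance then gives uniformity in $x$.

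The genuine gap is in Step 2. A torus eigenfunction $\psi$ of $H_{B,L}$ lifts to a magnetic-periodic function on $\RR^2$, which is not square integrable and hence not in $\operatorname{Ran}(\Pi_n)$, so you cannot ``apply the Step-1 inequality to the lifts $\tilde\psi$'' as written. If instead $\tilde\psi$ is meant to be an $L^2$ plane Landau function whose magnetic periodisation equals $\psi$, then you need a quantitative surjectivity statement for the periodisation map --- existence of such a preimage with norm and localisation near one fundamental domain controlled uniformly in $L$ --- which your sketch does not supply, and without which the ``cross terms'' bookkeeping is not justified. The cited proof avoids this by working at the level of kernels: the kernel of $\Pi_{n,L}$ is the magnetic periodisation of the explicit, Gaussian-decaying kernel of $\Pi_n$, so on cubes of side at most $\kappa R$ that do not wrap around the torus (this is where $L>2(L_B+\kappa R)$ enters), the operators $\Pi_{n,L}\hat\chi_{x,a}\Pi_{n,L}$, $a\in\{r,R,\kappa R\}$, differ from their plane counterparts by terms of norm at most $C\euler^{-m_{n,B}L}$, uniformly in $x$; these differences are exactly what is collected in $\tilde{\mathcal E}_x$. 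Recast Step 2 as this kernel comparison (or prove the quantitative periodisation statement), and your argument matches the literature proof; as stated, the transfer from the plane to the torus does not go through.
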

\begin{definition}
 Let $r \in (0, 1/2)$. A sequence $\{ x_j \}_{j \in \ZZ^2}$ is called $r$-equidistributed if for every $j \in \ZZ^2$ we have $B(x_j,r) \subset \Lambda_1(j)$.
 Given a $r$-equidistributed sequence and $L > 0$ we define $W_r(L) := \sum_{j \in \ZZ^2: B(x_j,r) \subset \Lambda_L} \chi_{B(x_j,r)}$.
\end{definition}

We use the previous lemma for the following observation:
\begin{lemma}
\label{lem:ucp}
 Fix $B > 0$, $n \in \NN$.
There are $C_1 = C_1(n,B,r) = C_0(n,B,r,4,1/162)/4$, $L_0 = L_0(n,B,r) > 0$ such that for all $L \in \NN_B$ with $L \geq L_0$, all $ r \in (0, 1/2)$ and all $r$-equidistributed sequences we have
 \[
  \Pi_{n,L} \chi_{W_r(L)} \Pi_{n,L} \geq C_1 \Pi_{n,L} .
 \]
\end{lemma}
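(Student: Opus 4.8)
The plan is to sum the scale-free unique continuation estimate of Lemma~\ref{lem:GKS} over the equidistributed sites and to absorb the accumulated, exponentially small error into a fixed fraction of the main term. Throughout, fix $n$, $B$ and $r\in(0,1/2)$ and set $J := \{j\in\ZZ^2 : B(x_j,r)\subset\Lambda_L\}$, the index set defining $W_r(L)$; the balls $B(x_j,r)$ are pairwise disjoint, so $\chi_{W_r(L)} = \sum_{j\in J}\chi_{B(x_j,r)}$, and clearly $\#J\leq(L+2)^2$.

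\emph{Reduction to cubes and summation.} The cube $\Lambda_r(x_j)$ has half-diagonal $r/\sqrt2<r$, hence $\Lambda_r(x_j)\subset B(x_j,r)$ and $\chi_{B(x_j,r)}\geq\hat\chi_{x_j,r}$ on the torus. I would then apply Lemma~\ref{lem:GKS} to each $j\in J$ with small scale $r$, with $R=4$, $\kappa=2$ and $\eta=1/162$ — admissible since $R>r$, and requiring $L>2(L_B+8)$ — compress by $\Pi_{n,L}$, and add the resulting inequalities. Writing $C_0:=C_0(n,B,r,4,1/162)$, $F:=\sum_{j\in J}\hat\chi_{x_j,4}$, $G:=\sum_{j\in J}\hat\chi_{x_j,8}$ and $\mathcal E:=\sum_{j\in J}\tilde{\mathcal E}_{x_j}$, this yields
\[
 \Pi_{n,L}\,\chi_{W_r(L)}\,\Pi_{n,L}\;\geq\;C_0\,\Pi_{n,L}\bigl(F-\tfrac{1}{162}G\bigr)\Pi_{n,L}+\Pi_{n,L}\,\mathcal E\,\Pi_{n,L}.
\]

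\emph{The covering estimate.} Here is the geometric core. Every unit cell $\Lambda_1(j)$ contained in $\Lambda_L$ has $j\in J$, and these cells fill a sub-cube of the torus whose complement is an annulus of width $<1$; since $x_j\in\Lambda_1(j)$, one checks that the fixed choice $R=4$ suffices for every point of the torus to lie within $\ell^\infty$-distance $2$ of some $x_j$ with $j\in J$, so $F\geq\mathbf1$ once $L$ exceeds a threshold. On the other hand, a fixed point $y$ lies in $\Lambda_8(x_j)$ only for those $j\in\ZZ^2$ with $\lVert y-j\rVert_\infty<9/2$, of which there are at most $9^2=81$, whence $G\leq 81\cdot\mathbf1$ and therefore $F-\tfrac{1}{162}G\geq\tfrac12\,\mathbf1$. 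Consequently $\Pi_{n,L}\bigl(F-\tfrac{1}{162}G\bigr)\Pi_{n,L}\geq\tfrac12\,\Pi_{n,L}$.

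\emph{Absorbing the error.} By Lemma~\ref{lem:GKS}, $\lVert\tilde{\mathcal E}_x\rVert\leq C_{n,B,r,4,1/162}\,\euler^{-m_{n,B}L}$ uniformly in $x$, hence $\lVert\mathcal E\rVert\leq(L+2)^2\,C_{n,B,r,4,1/162}\,\euler^{-m_{n,B}L}$, which is $\leq C_0/4$ for all large $L$. Choosing $L_0=L_0(n,B,r)$ so that $L\geq L_0$ implies both $L>2(L_B+8)$ and the bound just mentioned, we obtain for all such $L$ and all $r$-equidistributed sequences
\[
 \Pi_{n,L}\,\chi_{W_r(L)}\,\Pi_{n,L}\;\geq\;\tfrac{C_0}{2}\,\Pi_{n,L}-\tfrac{C_0}{4}\,\Pi_{n,L}\;=\;\tfrac{C_0}{4}\,\Pi_{n,L}\;=\;C_1\,\Pi_{n,L}.
\]
I expect the main obstacle to be the covering estimate — verifying that the \emph{fixed} radius $R=4$ genuinely reaches across the torus boundary annulus for all admissible $L$ and all $r$-equidistributed sequences, uniformly. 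Everything else (the ball/cube comparison, the overlap count for $G$, and the error absorption) is routine; the real input is only that Lemma~\ref{lem:GKS} is free of $L$ and $x$, which is exactly what makes summing $\sim L^2$ copies of it harmless.
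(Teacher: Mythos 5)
Your proposal is correct and follows essentially the same route as the paper's proof: sum Lemma~\ref{lem:GKS} with $R=4$, $\kappa=2$, $\eta=1/162$ over the equidistributed points, use the covering bound $\sum_j\hat\chi_{x_j,4}\geq 1$ together with the overlap count $\sum_j\hat\chi_{x_j,8}\leq 81$ to get the main term $\tfrac{C_0}{2}\Pi_{n,L}$, and absorb the at most $\sim L^2$ exponentially small error terms for $L\geq L_0$, yielding $C_1=C_0/4$. The covering step you flag is handled in the paper exactly as you sketch it (for $L\geq 3$ every point of the torus lies in some $\Lambda_4(x_j)$ with $B(x_j,r)\subset\Lambda_L$, since $x_j\in\Lambda_1(j)$ and the boundary annulus has width less than $1$).
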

\begin{remark}
An essentially equivalent bound can be found in \cite[Lemma 3.2.3]{RojasMolina-Thesis}, where it is used for a Wegner estimate for the Delone-alloy-type model using the method of \cite{CombesHK-07}.
\end{remark}

\begin{proof}
We choose a large $L \in \NN_B$ to be determined later and will apply Lemma \ref{lem:GKS} with $r$, $R = 4$, $\kappa = 2$ and $\eta = 1/162$.
 Recall that $r < 1/ 2 < R$.
 We estimate
 \begin{align*}
  & \Pi_{n,L} \chi_{W_r(L)} \Pi_{n,L}
   \geq \sum_{j \in \ZZ^2 : B(x_j,r) \subset \Lambda_L } \Pi_{n,L} \hat \chi_{x_j,r} \Pi_{n,L}\\
  &\geq C_0 \sum_{j \in \ZZ^2 : B(x_j,r) \subset \Lambda_L } \left( \Pi_{n,L} ( \hat \chi_{x_j,4 } - \eta \hat \chi_{x_j, 8 } ) \Pi_{n,L} + \Pi_{n,L} \tilde{\mathcal{E}}_{x_j} \Pi_{n,L} \right) .
 \end{align*}
 Since $x_j \in \Lambda_{1}(j)$ for all $j \in \ZZ^2$, we have for $L \geq 3$ that for every $x \in \Lambda_L$ there is $j \in \ZZ^2$ with $B(x_j,r) \subset \Lambda_L$ such that $x \in \Lambda_{4}(x_j)$. Therefore,
 \[
  \bigcup_{j \in \ZZ^2 : B(x_j,r) \subset \Lambda_L } \hat \Lambda_{4}(x_j) \supset \RR^2 / L \ZZ^2,\ \text{the}\ L-\text{torus}.
 \]
Furthermore, given $x \in \Lambda_L$, there are at most $81$ elementary cells $\Lambda_{1}(j)$ of $\ZZ^2$ in which $\hat \chi_{x_j, 8}$ can be non-zero.
Hence, we can bound the sum from below by
 \begin{equation}
 \label{eq:lemma_2.2}
  C_0 \left( \Pi_{n,L} - \eta 81 \Pi_{n,L} \right) + \sum_{j \in \ZZ^2 : B(x_j,r) \subset \Lambda_L } \Pi_{n,L} \tilde{ \mathcal{E}}_{x_j} \Pi_{n,L} = \frac{C_0}{2} \Pi_{n,L} + \Pi_{n,L} \mathcal{E}_L \Pi_{n,L}
 \end{equation}
 with a symmetric error operator $\mathcal{E}_L$ satisfying
 \[
  \lVert \mathcal{E}_L \rVert
   \leq C_{n,B} \left( \frac{L}{L_B} \right)^2 \euler^{- m_{n,B} L}.
 \]
This implies that there is $ \tilde L_0 > 0$ such that for all $L \in \NN_B$ with $L \geq \tilde L_0$ we have $\lVert \mathcal{E}_L \rVert \leq C_0/4$ whence in particular $\mathcal{E}_L \geq - C_0/4 \cdot \operatorname{Id}$ in the operator sense.
Since we used $L \geq 3 $ and since inequality~\eqref{eq:lemma_2.2} requires $L \geq 2 ( L_B + 8)$, we need $L \geq L_0 := \max \{ \tilde L_0, 2 ( L_B + 8), 3 \}$ to deduce the estimate
 \[
  \Pi_{n,L} \chi_{W_r(L)} \Pi_{n,L} \geq \frac{C_0}{4} \Pi_{n,L} .
  \qedhere
 \]
\end{proof}

\section{Proof of Theorem~\ref{thm:Wegner}}

We start the proof with the following abstract theorem:

\begin{theorem}
\label{prop:prop_1}
 Let $H$ be a lower semibounded self-adjoint operator with purely discrete spectrum,
 $V$ a bounded symmetric operator, and  $I \subset  J \subset  \RR$  two intervals.
 We assume that there are $C_2 > 0$ and a positive, symmetric operator $W$ such that
 \begin{equation}
 \label{eq:ucp_with_W}
  \chi_{J}(H) W \chi_{J}(H) \geq C_2 \chi_{J}(H).
 \end{equation}
 Then, for $\lVert V \rVert < \operatorname{dist}(I, J^c) \sqrt{C_2/(C_2 + 1 + \lVert W \rVert)}$ there is $C_3$ depending only on $C_2$, $\operatorname{dist}(I, J^c)$ and on $\lVert V \rVert$ such that
 \begin{equation}
 \label{eq:result_prop_1}
  \operatorname{Tr} \left[ \chi_{I}(H+V) \right]
  \leq
  C_3 \operatorname{Tr}(\chi_I(H+V) (W + W^2))
 \end{equation}
More precisely, we have
\[
  C_3 = \frac{\operatorname{dist}(I, J^c)^2}{C_2 \operatorname{dist}(I, J^c)^2 - \lVert V \rVert^2(C_2 + 1 + \lVert W \rVert)} .
\]
\end{theorem}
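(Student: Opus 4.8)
The plan is to compare the two quadratic forms $H+V$ and $H$ on the range of the spectral projection $\chi_I(H+V)$, and to use the hypothesis \eqref{eq:ucp_with_W} as a lower bound on $W$ restricted to the spectral subspace $\operatorname{Ran}\chi_J(H)$. Write $P = \chi_I(H+V)$ and $Q = \chi_J(H)$, $Q^\perp = \mathrm{Id}-Q$. The point is that if $\psi \in \operatorname{Ran} P$ then $\langle (H+V)\psi,\psi\rangle \in I \subset J$, so $H$ has most of its spectral mass of $\psi$ inside $J$; more precisely, since $\|V\|$ is small, the component $Q^\perp\psi$ must be small. Indeed $\langle H\psi,\psi\rangle = \langle(H+V)\psi,\psi\rangle - \langle V\psi,\psi\rangle$ lies within $\|V\|$ of $I$, and on $\operatorname{Ran}Q^\perp$ the operator $H$ is bounded away from $I$ by at least $\operatorname{dist}(I,J^c)$; a standard spectral-calculus estimate then gives $\|Q^\perp\psi\|^2 \le \|V\|^2/\operatorname{dist}(I,J^c)^2 \cdot \|\psi\|^2$ (possibly after also using $\langle(H-\text{const})\psi,\psi\rangle$ bounds — the clean way is to note $\|Q^\perp\psi\| \le \operatorname{dist}(I,J^c)^{-1}\|(H-z)Q^\perp\psi\|$ for a suitable reference point $z$ in $I$, and $(H-z)\psi = (H+V-z)\psi - V\psi$ has controlled norm on $\operatorname{Ran}P$).

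Next I would turn the operator inequality \eqref{eq:ucp_with_W} into an estimate for $\operatorname{Tr}(PWP)$ from below by $\operatorname{Tr}P$ up to error terms. For $\psi \in \operatorname{Ran}P$, decompose $\psi = Q\psi + Q^\perp\psi$ and expand
\[
 \langle W\psi,\psi\rangle = \langle WQ\psi,Q\psi\rangle + 2\operatorname{Re}\langle WQ\psi,Q^\perp\psi\rangle + \langle WQ^\perp\psi,Q^\perp\psi\rangle.
\]
The first term is $\ge C_2\|Q\psi\|^2 \ge C_2(1 - \|V\|^2/\operatorname{dist}(I,J^c)^2)\|\psi\|^2$ by \eqref{eq:ucp_with_W}; the cross term and last term are controlled by $\|W\|$ and $\|W\|^{1/2}$ times powers of $\|Q^\perp\psi\|$. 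To avoid the awkward $\|W\|^{1/2}$ I would instead bound $\langle (W+W^2)\psi,\psi\rangle$ from below: writing $\langle W^2\psi,\psi\rangle = \|W^{1/2}\psi\|^4/\|\psi\|^2$-type bounds is messy, so the cleaner route is to observe $W + W^2 = (W+\tfrac12)^2 - \tfrac14 \ge$ something, or more directly to use $2\operatorname{Re}\langle WQ\psi,Q^\perp\psi\rangle \ge -\langle WQ\psi,Q\psi\rangle - \langle W Q^\perp\psi, Q^\perp\psi\rangle$ together with $\langle W Q^\perp\psi,Q^\perp\psi\rangle \le \|W\|\|Q^\perp\psi\|^2$, and likewise absorb using the $W^2$ term where a factor $\|W\|$ appears. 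Summing $\langle(W+W^2)\psi_k,\psi_k\rangle$ over an orthonormal eigenbasis $\{\psi_k\}$ of $\operatorname{Ran}P$ (finite-dimensional since $H$ has discrete spectrum and $I$ is bounded — using that $H+V$ also has discrete spectrum as $V$ is bounded) yields
\[
 \operatorname{Tr}\bigl(P(W+W^2)P\bigr) \ge \Bigl(C_2 - \|V\|^2\operatorname{dist}(I,J^c)^{-2}(C_2 + 1 + \|W\|)\Bigr)\operatorname{Tr}P,
\]
and the smallness hypothesis on $\|V\|$ is exactly what makes the bracket positive, giving $C_3$ as its reciprocal.

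The main obstacle is the bookkeeping in the second step: getting the constant to come out as precisely $\operatorname{dist}(I,J^c)^2/(C_2\operatorname{dist}(I,J^c)^2 - \|V\|^2(C_2+1+\|W\|))$ rather than something merely of the same shape requires choosing the Cauchy–Schwarz / Young splittings optimally and matching the $\|W\|$-weighted errors against the $W^2$ contribution rather than against $W$. I also need to be careful that \eqref{eq:ucp_with_W} is an inequality between the operators $\chi_J(H)W\chi_J(H)$ and $C_2\chi_J(H)$ on all of the Hilbert space, so it directly gives $\langle WQ\psi,Q\psi\rangle \ge C_2\|Q\psi\|^2$ only after noting $Q(Q\psi) = Q\psi$; that is automatic. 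A minor point is justifying $\operatorname{Tr}P < \infty$ and that all traces are of trace-class (or at least nonnegative with well-defined, possibly infinite, value, in which case the inequality still makes sense) — this follows since $\chi_I(H+V)$ is a finite-rank projection.
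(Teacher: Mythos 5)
Your outline is essentially the paper's proof, reorganized per eigenfunction instead of at the trace level: the paper splits $\operatorname{Tr}(\chi_I(H+V))$ into the parts against $\chi_J(H)$ and $\chi_{J^c}(H)$, controls the latter by exactly your resolvent bound applied to eigenfunctions of $H+V$, and treats the former by inserting \eqref{eq:ucp_with_W} and expanding $\chi_J(H)W\chi_J(H)$ around $W$ with a Young-type splitting. Summed over an eigenbasis of $\operatorname{Ran}\chi_I(H+V)$, your target inequality $\operatorname{Tr}\bigl(P(W+W^2)P\bigr)\geq\bigl(C_2-\lVert V\rVert^2\operatorname{dist}(I,J^c)^{-2}(C_2+1+\lVert W\rVert)\bigr)\operatorname{Tr}P$ is precisely the paper's final estimate rearranged, so both route and constant agree.

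Two steps, however, need repair as written (both are fixable and amount to what the paper actually does). First, the bound $\lVert Q^\perp\psi\rVert\leq\lVert V\rVert\operatorname{dist}(I,J^c)^{-1}\lVert\psi\rVert$ with $Q^\perp=\operatorname{Id}-\chi_J(H)$ is not available for a general $\psi\in\operatorname{Ran}P$: with a fixed reference point $z\in I$ you pick up an extra contribution of order $\lvert I\rvert$ from $\lVert(H+V-z)\psi\rVert$. It is clean only for eigenfunctions, where $(H-E_k)\psi_k=-V\psi_k$ gives $Q^\perp\psi_k=-(H-E_k)^{-1}Q^\perp V\psi_k$ and hence $\lVert Q^\perp\psi_k\rVert\leq\lVert V\rVert/\operatorname{dist}(I,J^c)$; since you compute the trace in the eigenbasis anyway, run the whole estimate for the $\psi_k$ only. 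Second, the explicit cross-term inequality you propose, $2\operatorname{Re}\langle WQ\psi,Q^\perp\psi\rangle\geq-\langle WQ\psi,Q\psi\rangle-\langle WQ^\perp\psi,Q^\perp\psi\rangle$, cancels the main term $C_2\lVert Q\psi\rVert^2$ and yields nothing. The splitting that produces the exact constant is the one you allude to but do not execute: use the identity $\langle WQ\psi,Q\psi\rangle=\langle W\psi,\psi\rangle-2\operatorname{Re}\langle W\psi,Q^\perp\psi\rangle+\langle WQ^\perp\psi,Q^\perp\psi\rangle$ (i.e. take the cross term against the full vector), then bound $2\lvert\langle W\psi,Q^\perp\psi\rangle\rvert\leq\lVert W\psi\rVert^2+\lVert Q^\perp\psi\rVert^2$, absorbing $\lVert W\psi\rVert^2=\langle W^2\psi,\psi\rangle$ into the $W^2$ term, and $\langle WQ^\perp\psi,Q^\perp\psi\rangle\leq\lVert W\rVert\lVert Q^\perp\psi\rVert^2$. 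For each normalized eigenfunction this gives $\langle(W+W^2)\psi_k,\psi_k\rangle\geq C_2-(C_2+1+\lVert W\rVert)\lVert V\rVert^2/\operatorname{dist}(I,J^c)^2$, and summing yields \eqref{eq:result_prop_1} with the stated $C_3$.
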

Inequality \eqref{eq:ucp_with_W} tells us that $W$ is strictly positive on a spectral subspace of $H$, or
equivalently that vectors in the spectral subspace of $H$ cannot be \emph{completely localized with respect to $W$}.

\begin{proof}
 We decompose
 \begin{equation}
  \label{eq:decomposition}
  \operatorname{Tr}(\chi_{I}(H+V)) =   \operatorname{Tr}(\chi_I(H+V)\chi_{J}(H)) + \operatorname{Tr}(\chi_I(H+V) \chi_{J^c}(H)).
 \end{equation}

 We estimate the term in \eqref{eq:decomposition} containing $\chi_{J^c}(H)$ by expanding the trace in eigenfunctions $\phi_j$ in the range of $\chi_I(H+V)$.
 From the eigenvalue equation $(H + V - E_j) \phi_j = 0$ we deduce
 \[
  - (H - E_j)^{-1} \chi_{J^c}(H) V \chi_j = \chi_{J^c}(H) \phi_j.
 \]
 This yields
 \begin{align}
 \label{eq:I^c}
  \operatorname{Tr} \left( \chi_I(H+V) \chi_{J^c}(H) \right)
 & = \sum_j \left\langle \phi_j, \chi_{J^c}(H) \phi_j \right\rangle = \\
 \nonumber
  \sum_j \left\langle \phi_j, \left( V \frac{\chi_{J^c}(H)}{(H - E_j)^2} V \right) \phi_j \right\rangle
  &\leq
    \frac{\lVert V \rVert^2}{\operatorname{dist}(I, J^c)^2} \operatorname{Tr} \left( \chi_I(H+V) \right) .
 \end{align}
 Now we turn to the first summand on the right hand side of \eqref{eq:decomposition}.
 Using the assumption \eqref{eq:ucp_with_W}, we have
 \begin{align*}
  \operatorname{Tr} ( \chi_I(H+V) \chi_{J}(H))
  & \leq \frac{1}{C_2} \operatorname{Tr} (\chi_I(H+V) \chi_{J}(H) W \chi_{J}(H)) \\
  & =\frac{1}{C_2} [ \operatorname{Tr} (\chi_I(H + V) W) + \operatorname{Tr} (\chi_I(H+V) \chi_{J^c}(H) W \chi_{J^c}(H) ) \\
  & \quad  - 2 \operatorname{Re}(\operatorname{Tr} \left[ \chi_I(H+V) \chi_{J^c}(H) W) \right]\\
  & \leq \frac{1}{C_2} [ \operatorname{Tr} (\chi_I(H + V) W) + \lVert W \rVert \operatorname{Tr} ( \chi_I(H+V) \chi_{J^c}(H))\\
  & \quad +  \operatorname{Tr}(\chi_I(H+V) \chi_{J^c}(H)) + \operatorname{Tr}(\chi_I(H+V) W^2)].
 \end{align*}
In the last step, we used $- \operatorname{Re}(x) \leq \lvert x \rvert$, cyclicity of the trace, the Hoelder inequality for traces, and the fact that $2 ab \leq a^2 + b^2$ to estimate
\begin{align*}
- 2 \operatorname{Re}(\operatorname{Tr} ( \chi_I(H+V) \chi_{J^c}(H) W)
& \leq
 2 \lvert \operatorname{Tr} \left( \chi_I(H+V) \chi_{J^c}(H) W \chi_I(H+V) \right) \rvert\\
& \leq
 \operatorname{Tr} ( \chi_I(H+V) \chi_{J^c} )
 +
 \operatorname{Tr} ( \chi_I(H+V) W^2 ) .
\end{align*}
This simplifies to
\begin{equation}
\label{eq:2}
 \operatorname{Tr} ( \chi_I(H+V) \chi_{J}(H) )
 \leq
 \frac{1 + \lVert W \rVert}{C_2} \operatorname{Tr} (\chi_I(H+V) \chi_{J^c}(H))
 +
 \frac{1}{C_2} \operatorname{Tr} ( \chi_I(H+V) (W + W^2)).
\end{equation}
Combining \eqref{eq:decomposition} with \eqref{eq:I^c} and \eqref{eq:2} we find
\begin{align*}
 \operatorname{Tr} ( \chi_I(H+V) )
 &\leq
 \left( 1 + \frac{1 + \lVert W \rVert}{C_2} \right) \frac{\lVert V \rVert^2}{\operatorname{dist}(I, J^c)^2} \operatorname{Tr} ( \chi_I(H+V) )\\
 & \qquad
 + \frac{1}{C_2} \operatorname{Tr} ( \chi_I(H+V) (W + W^2))
\end{align*}
that is
\[
 \operatorname{Tr} (\chi_I(H+V) ) \leq \frac{\operatorname{dist}(I, J^c)^2}{C_2 \operatorname{dist}(I, J^c)^2 - \lVert V \rVert^2(C_2  + 1 + \lVert W \rVert)} \operatorname{Tr}(\chi_I(H+V) ( W + W^2)).
\]

\end{proof}

We are now ready for the proof of Theorem~\ref{thm:Wegner}.

\begin{proof}[Proof of Theorem~\ref{thm:Wegner}]
Given $B > 0$ and $E_0 \in \RR$, there are finitely many Landau Levels below $E_0 + B/4$.
Let $L \in \NN_B$, $L \geq L_0$ as in Lemma~\ref{lem:ucp}.
Take
\[
  \tilde C := \min \{ C_1(n,B,r)\ \text{from Lemma~\ref{lem:ucp}}: n \in \NN\ \text{with}\ B (2 n - 1) \leq E_0 + B/4 \}.
\]
Let $I = [I_-, I_+] \subset (- \infty, E_0]$ with $I_+ - I_- \leq B/2$.
For every constellation $\{ \omega_j \}_{j \in \ZZ_B^2}$, we apply Proposition~\ref{prop:prop_1} with $V = \lambda V_\omega$ and $J = [I_- - B/4, I_+ + B/4]$ and $W = \cup_{j \in \ZZ^2: B(x_j + j,r) \subset \Lambda_L} \chi_{B(x_0(\omega_j) + j),r}$, where the $x_0(\omega_j)$ are the points from \eqref{eq:partial_u}.
Note that $J$ contains at most one Landau Level.
Hence, \eqref{eq:ucp_with_W} holds by Lemma~\ref{lem:ucp} with $C_2 = \tilde C$.
We find
\[
 \operatorname{Tr} ( \chi_I(H+V)) \leq  C_3 \operatorname{Tr}(\chi_I(H_{B,L}^\omega)(W)),
\]
where
\[
 C_3 = \frac{(B/4)^2}{\tilde C (B/4)^2 - \lVert V_\omega \rVert_\infty^2 \lambda^2 (\tilde C + 2)}.
\]
We have $\lVert V_\omega \rVert_\infty \leq V_\infty :=  \lceil \max \operatorname{supp} u \rceil^2 \lVert u \rVert_\infty$.
For
\begin{equation}
\label{eq:lambda}
 \lambda \leq \lambda_0 := B \sqrt{ \tilde C / (32 V_\infty^2 (\tilde C + 2))} ,
\end{equation}
it holds that $C_3 \leq 2/\tilde C$.
We have estimated so far
\[
 \operatorname{Tr} ( \chi_I(H+V)) \leq \frac{2}{\tilde C} \operatorname{Tr}(\chi_I(H_{B,L}^\omega)(W)) .
\]
Now we take a monotone decreasing function $f \in C^1(\RR)$ with $f \equiv 1$ on $(- \infty, I_- - B/4]$ and $f \equiv 0$ on $[I_+ + B/4, \infty)$ such that $- C_4 \lvert I \rvert f'(x) \geq   \chi_I(x)$ for some $C_4 > 0.$
Then
\begin{align*}
 \operatorname{Tr} (\chi_I(H_{B,L}^\omega) W )
 &\leq
 - C_4 \lvert I \rvert \operatorname{Tr} (f'(H_{B,L}^\omega) W) \\
 & \leq
 - C_u^{-1} C_4 \lvert I \rvert \sum_{j \in \ZZ^2 \cap \Lambda_L} \mathrm{Tr} ( f'(H_{B,L}^\omega) \frac{\partial}{\partial \omega_j} u_{\omega_j}(x -j))\\
 &=
  - C_u^{-1} C_4 \lvert I \rvert \sum_{j \in \ZZ^2 \cap \Lambda_L} \mathrm{Tr} (\frac{\partial}{\partial \omega_j} f(H_L^\omega)).
\end{align*}
We take the expectation and obtain
\[
 \EE \left[ \operatorname{Tr} (\chi_I(H_{B,L}^\omega) W )  \right]
 \leq
 - C_u^{-1} C_4 \lvert I \rvert \sum_{j \in \ZZ^2 \cap \Lambda_L}
  \EE \left[ \frac{\partial}{\partial \omega_j} \operatorname{Tr} ( f(H_{B,L}^\omega )) \right].
\]
We evaluate the expectation in every summand with respect to the random variable $\omega_j$
\begin{align*}
 0
 &\leq
 - \EE \left[ \frac{\partial}{\partial \omega_j} \operatorname{Tr} ( f(H_{B,L}^\omega )) \right]
 =
 - \EE \left[ \int_{\omega_-}^{\omega_+}  \frac{\partial}{\partial \omega_j} \operatorname{Tr} ( f(H_{B,L}^\omega )) \mathrm{d} \omega_j \right]\\
 &\leq
 \lVert \nu_\mu \rVert_\infty \EE
  \left[ \lvert \operatorname{Tr}
    \left( f(H_{B,L}^\omega \mid_{\omega_j = \omega_+}) - f(H_{B,L}^\omega \mid_{\omega_j = \omega_-})
    \right) \rvert
  \right]
\end{align*}
Analogously to \cite{CombesHK-03}, Appendix $A$ we find for every $\theta \in (0,1)$ a constant $C_\theta > 0$ such that
\[
 \left\lvert \operatorname{Tr}
    \left( f(H_{B,L}^\omega \mid_{\omega_j = \omega_+}) - f(H_{B,L}^\omega \mid_{\omega_j = \omega_-})
    \right) \right\rvert \leq C_\theta \lvert I \rvert^{\theta - 1}.
\]
All together we found
\[
 \operatorname{Tr} \left( \chi_I(H_{B,L}^\omega) \right)
 \leq
  \frac{4}{C_u \tilde C} C_4 \lvert I \rvert C_\theta \lvert I \rvert^{\theta - 1} \# \{ \Lambda_L \cap \ZZ^2 \} = C \lvert I \rvert^{\theta} L^2. \qedhere
\]
\end{proof}

\begin{remark}
Let us briefly discuss potential improvements of our results, first concerning the assumption on the small coupling constant.

 One candidate for replacing the smallness condition on $\lVert V \rVert$ in Theorem~\ref{prop:prop_1} which ensures positivity of $C_2 \operatorname{dist}(I, J^c)^2 - \lVert V \rVert^2(C_2 + 1 + \lVert W \rVert)$ would be a largeness condition on $\operatorname{dist}(I, J^c)$.
 In the application (i.e. in the proof of Theorem~\ref{thm:Wegner}) an upper bound on $\operatorname{dist}(I, J^c)$ arises from the fact that $J$ should contain at most one Landau Level.
 Therefore, it would be desirable to improve Lemma~\ref{lem:ucp} to something like
 \[
  \left( \sum_{k = 1}^n \Pi_{n,L} \right) W_r(L) \left( \sum_{k = 1}^n \Pi_{n,L} \right)
  \geq
  C \left( \sum_{k = 1}^n \Pi_{n,L} \right),
 \]
 where we know how $C = C(n,B,r)$ behaves asymptotically for large $n$.
 In particular we would need to have
 \[
 C(n,B,r) \gg n^{-2}.
 \]

Another critique of our result is that it gives only Hoelder continuity of the integrated density of states, not Lipschitz continuity.
 There is a Wegner estimate for the alloy-type model in \cite{CombesHK-07} where the exponent $\theta \in (0,1)$ has been replaced by the optimal $1$, but the proof given there heavily relies on linearity of the random potential.
\end{remark}

\subsubsection*{Acknowledgements}
Part of this work was done while the authors were visiting the Hausdorff Research Institute for Mathematics
during the Trimester Program \emph{Mathematics of Signal Processing}.
Financial support by the DFG through grant \emph{Unique continuation principles and equidistribution properties of eigenfunctions}.
We thank M.~Egidi for reading an earlier version of the manuscript.


\end{document}